\renewcommand*\env@matrix[1][*\c@MaxMatrixCols c]{%
  \hskip -\arraycolsep
  \let\@ifnextchar\new@ifnextchar
  \array{#1}}
\newtheorem{theorem}{Theorem}
\newcommand{\floor}[1]{\lfloor #1 \rfloor}
\newcommand{\tensor}{\otimes}
\newcommand{\ceil}[1]{\left\lceil #1 \right\rceil}
\newcommand{\GF}{\text{GF}}
\newcommand{\qcode}[1]{\left[\!\left[ #1 \right]\!\right]}
\newcommand{\st}[2]{\genfrac{[}{]}{0pt}{}{#1}{#2}}
\title{Robust Syndrome Extraction via BCH Encoding}
\author{Eren Guttentag$^{1,2}$, Andrew Nemec$^{1,2}$, and Kenneth R. Brown$^{1,2,3,4}$\\\small%
	$^{1}$Duke Quantum Center, Duke University, Durham, NC 27701, USA\\%
	$^{2}$Department of Electrical and Computer Engineering, Duke University, Durham, NC 27708, USA\\%
    $^{3}$Department of Physics, Duke University, Durham, NC 27708, USA\\%
    $^{4}$Department of Chemistry, Duke University, Durham, NC 27708, USA
}
\begin{document}
	\maketitle
	\begin{abstract}
  Quantum data-syndrome (QDS) codes are a class of quantum error-correcting codes that protect against errors both on the data qubits and on the syndrome itself via redundant measurement of stabilizer group elements. One way to define a QDS code is to choose a syndrome measurement code, a classical block code that encodes the syndrome of the underlying quantum code by defining additional stabilizer measurements.  
  
  We propose the use of primitive narrow-sense BCH codes as syndrome measurement codes. 
  We show that these codes asymptotically require $O(t\log\ell)$ extra measurements, where $\ell$ is the number of stabilizers generators of the quantum code and $t$ is the number of errors corrrected by the BCH code.

    Previously, the best known general method of constructing QDS codes out of quantum codes requires $O(t^3\log\ell)$ extra measurements. As the number of additional syndrome measurements is a reasonable metric for the amount of additional time a general QDS code requires, we conclude that our construction protects against the same number of syndrome errors with significantly less time overhead. 
	    
	\end{abstract}
	\section{Introduction}
    The ability to accurately detect, identify, and correct errors is essential to building functional and scalable quantum computers. This is typically achieved through the use of quantum stabilizer codes, which are defined by their stabilizer group. The measurement of the group generators produces a binary syndrome that indicates the locations of errors. These measurements involve several multi-qubit gates, which can introduce errors on the qubits involved and corrupt the measurement outcome.

    For syndrome fault tolerance, we can measure additional stabilizer group elements to add redundancy. One common way to do this is to repeatedly measure the same set of stabilizers \cite{Shor1996,DiVincenzo1996}. However, the syndrome can be protected much more efficiently with the use of quantum data-syndrome (QDS) codes. Introduced over a series of papers by Fujiwara \cite{Fujiwara2014, Fujiwara2015} and Ashikhmin, Lai, and Brun \cite{Ashikhmin2014, Ashikhmin2016, Ashikhmin2020}, QDS codes simultaneously encode quantum information and protect against syndrome errors. Recent work includes decoding QDS codes \cite{Kuo2021, Raveendran2022}, connections to 2-designs \cite{Premakumar2021}, and extenstions to quantum convolutional \cite{Zeng2019} and subsystem codes \cite{Nemec2023}. QDS codes are also closely related to single-shot quantum error correction \cite{Bombin2015,Campbell2019}.

    We propose a way of constructing quantum data-syndrome codes using primitive narrow-sense BCH codes. We will also show that such a construction that protects against $t$ syndrome errors requires $O(t\log(n-k))$ additional measurements, which is a significant improvement over other ways of constructing these codes. Recently, BCH codes have also been used to design good flag fault-tolerant syndrome extraction schemes \cite{Anker2022}.

    In this paper we will look at a phenomenological error model, in which individual gates of a stabilizer have a chance of causing an error on the syndrome bit. This model does not consider hook errors propagated onto a circuit due to gate errors, but is intended to be a stepping-stone to exploring a full circuit model in the future.

	\section{Background}
 
    \subsection{Stabilizer codes}
    
	Pauli operators on $n$ qubits are $n$-fold tensor products of Pauli matrices $P_0=I=\smqty[1&0\\0&1],P_1=X=\smqty[0&1\\1&0],P_2=Y=\smqty[0&-i\\i&0],P_3=Z=\smqty[1&0\\0&-1]$, of the form $$i^{c}\cdot P_{a_0}\tensor P_{a_1}\tensor\cdots\tensor P_{a_{n-1}},$$ with $a_i,c\in\{0,1,2,3\}$. For simplicity's sake we will omit tensor products from our notation; so the three-qubit operator $X\tensor Y\tensor I$ becomes $XYI$. These operators form the group $\mathcal{P}^n$.
    
    An $\qcode{n,k,d}$ stabilizer code encoding $k$ logical qubits into $n$ physical qubits is defined by its $\ell:=n-k$ independent \textit{stabilizer} generators, $\{g_1,g_2,\dots,g_\ell\}\subset\mathcal{P}^n$. These operators generate the stabilizer group $\mathcal{S}\subset\mathcal{P}^n$ of the code, which is commutative, does not contain $-I^{\tensor n}$, and has order $2^{\ell}$. Elements of the stabilizer group fix the states in the quantum code: for any stabilizer $g_i\in\mathcal{S}$ and any codeword $\ket{\psi}$ of the associated quantum code, $g_i\ket\psi=\ket\psi$.
    
    A stabilizer on $n$ qubits can be seen as a length-$n$ vector with elements in $\GF(4)$, under the homomorphism $\tau:\mathcal{P}\to\GF(4)$ that maps $I\to0,X\to1,Y\to\bar\omega=1+\omega,Z\to\omega$ and ignores any global phase of $\pm1,\pm i$. This can be naturally extended to send operators on $n$ qubits from $\mathcal{P}^n$ to $\GF(4)^n$ \cite{Ashikhmin2020}. Under this homomorphism, multiplication of stabilizers corresponds to bitwise addition in $\GF(4)^n$. Throughout this paper we will use $g\in\mathcal{P}^n$ to refer to a stabilizer and $\mathbf{g}\in\GF(4)^n$ to refer to the corresponding length-$n$ $\GF(4)$ vector $\tau(g)$. For instance, the 3-qubit stabilizer $g=IXY$ is equivalently represented by the length-3 $GF(4)$ vector $\mathbf{g}=\tau(IXY) =01\bar\omega$.
    
    In general, an error $E$ on an $\qcode{n,k,d}$ quantum code can be represented as an element of $\mathcal{P}^n$. The measurement of its set of $\ell$ stabilizer operators gives as an output a length-$\ell$ binary vector called the syndrome $\mathbf{s}=(s_1,s_2,\dots,s_\ell)$. The $i$-th syndrome bit $s_i$, corresponding to stabilizer operator $g_i$, is 0 if $E$ and $g_i$ commute, and 1 if they anticommute. For vectors $\mathbf{x},\mathbf{y}\in\GF(4)^n$ with elements $x_i,y_i\in\GF(4)$, the analagous function is the \textit{trace inner product} $\mathbf{x}\star\mathbf{y}:\GF(4)^n\cross\GF(4)^n\to \GF(2)$:
    $$\mathbf{x}\star\mathbf{y}=\sum_{i=1}^n(x_i\bar{y_i}+\bar{x_i}y_i),$$
    where $\bar{0}=0,\bar{1}=1,\bar{\omega}=1+\omega,\overline{1+\omega}=\omega$, and multiplication is typical in $\GF(4)$ \cite{Calderbank1997}. Essentially, the $i$-th element of the sum is 0 if $x_i=0,\,y_i=0$, or $x_i=y_i$, and 1 otherwise. If an odd number of summands are 1, the trace will be 1, and if an even number are 1, the trace will be 0. Therefore the trace inner product is 0 when the Pauli operators represented by $\mathbf{x}$ and $\mathbf{y}$ commute, and is 1 when they anticommute. 

	\subsection{Quantum Data-Syndrome Codes}

    We assume that Shor-style syndrome extraction \cite{Shor1996,DiVincenzo1996} is used for measurement, in which a weight-$w$ stabilizer can be measured fault-tolerantly using $w$ single-qubit measurements. This is done using transversal gates and a $w$-qubit ancilla cat state to prevent a single error on an ancilla qubit from propagating to more than one data qubit.
    
    In this paper, we focus on the phenomenological model where errors can either occur on the data qubits or on the syndrome bits. We only consider \textit{measurement errors}---errors that result in a single syndrome bit-flip, equivalent to an error on the ancilla qubit after all transversal gates have already occurred, with no propagation to data qubits. Let $p_{m}$ be the probability of a single-qubit measurement error. Then the probability of incorrectly measuring a stabilizer generator $S_{i}$ of weight $w_{i}$ is given by
    \begin{equation}
    p_{err}\!\left(S_{i}\right)=\sum\limits_{j\text{ odd}}\binom{w_{i}}{j}p_{m}^{j}\left(1-p_{m}\right)^{w_{i}-j}.
    \end{equation}

    If any bits of the syndrome are flipped during measurement, the resulting erroneous syndrome vector $\hat{\mathbf{s}} =\mathbf{s}+\hat{\mathbf{e}}$ may suggest an incorrect series of gates to restore the state. Some stabilizer codes have a choice of generators that allow them to correct either a single data error or a single syndrome error \cite{Fujiwara2015, Nemec2023}, but in general, we need to perform additional stabilizer measurements to add redundancy against syndrome errors. Shor accomplished this by repeatedly measuring each stabilizer generator multiple times \cite{Shor}. However, this can be achieved more efficiently by measuring an overdetermined set of stabilizer elements. We call a quantum code with $r$ extra measurements beyond the $\ell=n-k$ needed an $\qcode{n,k,d:r}$ quantum data-syndrome (QDS) code if it can correct up to a combined $\left\lfloor\left(d-1\right)/2\right\rfloor$ data and syndrome errors.
	
	A reasonable metric for the amount of extra time a QDS code takes is the number of extra syndrome measurements that take place. Assuming elements of the syndrome group are of roughly equal weight, the measurement of each syndrome bit involves a similar number of gates to be performed, and thus takes a similar amount of time \cite{Delfosse2022}. This assumes that measurements are done sequentially; the ability to measure in parallel could produce further improvement. Because of this, the most efficient QDS codes are those that correct a certain number of errors with the fewest additional stabilizer measurements. This is one reason that encoding syndrome information by simply repeating stabilizer measurements is inefficient---an $m$-fold repetition of $\ell$ stabilizer measurements that can correct up to $\floor{\frac{m-1}{2}}$ syndrome errors requires $O(m\ell)$ additional measurements. 
	 
	A general framework for designing a QDS code with a desired total distance $2t_c+1$ was proposed by Fujiwara \cite{Fujiwara2015}. It involves \textit{$s$-detection parity check matrices} ($s$-DPMs), which are parity-check matrices for codes that can \textit{detect} up to $s$ errors (without necessarily being able to correct or identify them) \cite{Fujiwara2015}. Specfically, an $s$-DPM$(m,w)$ is a binary $m\times w$ matrix such that any $m\times s$ submatrix contains a row of odd weight. In \cite[Theorem 3.5]{Fujiwara2015}, Fujiwara showed that a $2i$-DPM$(m,w)$ exists so long as:
    $$m\geq \ceil{\log_2\left(\binom{w}{2i}-\binom{w-2i}{2i}\right)+\log_2e}.$$

    Fujiwara uses these parity check matrices to construct a QDS code from a $\qcode{n,k,2t+1}$ stabilizer code that can protect against all errors such that the sum of the errors on syndrome bits and data qubits is $t$ or fewer, and at most $t_c$ of those occur on the syndrome bits. The resulting stabilizer parity check matrix has $|S|$ rows where:
    $$|S|=n-k+2t_c+\sum_{i=1}^{t_c}(2t_c-2i+1)m_i,$$
    $$m_i=\ceil{\log_2\left(\binom{n-k}{2i}-\binom{n-k-2i}{2i}\right)+\log_2e}.$$

    \begin{theorem}
    The construction in \cite{Fujiwara2015} requires $O(t_c^3\log\ell)$ additional stabilizer measurements.
    \end{theorem}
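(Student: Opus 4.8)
The plan is to isolate the number of \emph{additional} measurements, which is $r = |S| - \ell$, and bound each piece of the displayed formula for $|S|$ directly. Subtracting $\ell = n-k$ gives
$$r = 2t_c + \sum_{i=1}^{t_c} (2t_c - 2i + 1)\, m_i,$$
so it suffices to bound the summation, since the standalone $2t_c$ term is $O(t_c)$ and hence of lower order.

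First I would bound a single $m_i$. Dropping the nonnegative subtracted binomial and using $\binom{\ell}{2i} \le \ell^{2i}$ yields
$$m_i \le \log_2\binom{\ell}{2i} + \log_2 e + 1 \le 2i\log_2\ell + O(1) = O(i\log\ell),$$
where the additive $+1$ absorbs the ceiling and, provided $\ell \ge 2$ so that $\log_2\ell \ge 1$, the $O(1)$ terms are genuinely swallowed into $O(i\log\ell)$. In particular, since $i$ ranges up to $t_c$, every $m_i$ satisfies $m_i = O(t_c\log\ell)$. Next I would combine this with the coefficient $2t_c - 2i + 1 \le 2t_c = O(t_c)$ to get
$$\sum_{i=1}^{t_c}(2t_c - 2i + 1)\, m_i = O(t_c\log\ell)\sum_{i=1}^{t_c} i = O(t_c^3\log\ell),$$
using $\sum_{i=1}^{t_c} i = t_c(t_c+1)/2 = O(t_c^2)$. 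Adding back the $2t_c$ term leaves this unchanged, so $r = O(t_c^3\log\ell)$ as claimed.

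There is no deep obstacle here; the estimate is routine. The one point worth stating carefully is the origin of the cubic power: one factor of $t_c$ comes from the coefficient $2t_c - 2i + 1$, while the remaining $t_c^2$ comes from the triangular sum $\sum_{i=1}^{t_c} i = \Theta(t_c^2)$, which already accounts jointly for both the growth of $m_i$ (scaling like $i\log\ell$) and the number of summands. It is also worth double-checking that discarding $\binom{\ell - 2i}{2i}$ only weakens the bound, which is all that is needed for an upper-bound claim.
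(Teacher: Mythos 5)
Your argument is correct for the statement as literally written, and it is simpler than the paper's, but it runs in the opposite direction. You upper-bound each $m_i$ by discarding the subtracted term $\binom{\ell-2i}{2i}$ and using $\binom{\ell}{2i}\le\ell^{2i}$ to get $m_i=O(i\log\ell)$, then bound the coefficient by $2t_c$ and the triangular sum by $O(t_c^2)$; this cleanly yields the upper bound $r=O(t_c^3\log\ell)$. The paper instead expands $\binom{\ell}{2i}-\binom{\ell-2i}{2i}$ using Stirling numbers of the first kind to isolate the leading term $\frac{2i\,\ell^{2i-1}}{(2i-1)!}$, deduces $m_i\ge(2i-1)\log_2\ell+O(1)$, and evaluates $\sum_{i=1}^{t_c}(2i-1)(2t_c-2i+1)=\frac{2t_c^3+t_c}{3}$ exactly, obtaining a matching \emph{lower} bound with leading term $\frac{2}{3}t_c^3\log_2\ell$. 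The distinction matters because this theorem is invoked to argue that Fujiwara's construction is more expensive than the BCH one: for that comparison one needs the count to be $\Omega(t_c^3\log\ell)$ (the word ``requires'' is really a lower-bound claim), and your upper bound alone does not supply this. Conversely, the paper's proof as written only establishes the lower bound and never verifies the $O(\cdot)$ direction it nominally asserts; your estimate fills in exactly that half. Taken together, the two arguments give the honest conclusion that the number of additional measurements is $\Theta(t_c^3\log\ell)$.
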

    \begin{proof}

        As we have $\ell=n-k$, then we can express $m_i$ in terms of $\ell$ as:

        $$m_i=\ceil{\log_2\left(\binom{\ell}{2i}-\binom{\ell-2i}{2i}\right)+\log_2e}.$$

        As we are interested in the behavior of this code asymptotically, we look at the behavior for larger $\ell$. The binomial coefficient $\binom{n}{k}$ can in general be expressed as the polynomial:
        \begin{equation}
            \binom{n}{k}=\sum_{i=0}^ks(k,i)\frac{n^i}{k!},
        \end{equation}
    where $s(k,i)=(-1)^{k-i}\st{k}{i}$ are the Stirling numbers of the first kind \cite{Abramowitz1972}. There are several special values of Stirling numbers of the first kind; most valuable to us are $s(n,n)=\st{n}{n}=1$ and $s(n,n-1)=-\st{n}{n-1}=-\binom{n}{2}=-\frac{n(n-1)}{2}$. 
    
    Now consider the terms in these expansions with order greater than $2i-2$:
    
    $$\binom{\ell}{2i}=\st{2i}{2i}\frac{\ell^{2i}}{(2i)!}-\st{2i}{2i-1}\frac{\ell^{2i-1}}{(2i)^!}+O(\ell^{2i-2}),$$
    $$\binom{\ell-2i}{2i}=\st{2i}{2i}\frac{(\ell-2i)^{2i}}{(2i)!}-\st{2i}{2i-1}\frac{(\ell-2i)^{2i-1}}{(2i)!}+O(\ell^{2i-2}).$$

    Expanding the $(\ell-2i)^{2i}$ and $(\ell-2i)^{2i-1}$ terms, we see:
		$$(\ell-2i)^{2i} = \ell^{2i}-(2i)^2\ell^{2i-1}+O(\ell^{2i-2}),$$
		$$(\ell-2i)^{2i-1} = \ell^{2i-1}+O(\ell^{2i-2}),$$
    so in fact the $\ell^{2i}$ and $\ell^{2i-1}$ terms of the polynomials are (evaluating the Stirling numbers of the first kind):
    $$\binom{\ell}{2i}=\frac{\ell^{2i}}{(2i)!}-\frac{2i(2i-1)}{2}\frac{\ell^{2i-1}}{(2i)^!}+O(\ell^{2i-2}),$$
		$$\binom{\ell-2i}{2i} =\frac{\ell^{2i}}{(2i)!}-\frac{(2i)^2\ell^{2i-1}}{(2i)!}-\frac{2i(2i-1)}{2}\frac{\ell^{2i-1}}{(2i)!}+O(\ell^{2i-2}).$$
    Then we can see the difference between these two binomial coefficients will have a highest-order term in $\ell$ of the form $\ell^{2n-1}$:
    $$\binom{\ell}{2i}-\binom{\ell-2i}{2i}=\frac{2i\ell^{2i-1}}{(2i-1)!}+O(\ell^{2i-2}).$$

    We know the $m_i$ is at least $(2i-1)\log_2\ell$:
    $$m_i=\ceil{\log_2\left(\binom{\ell}{2i}-\binom{\ell-2i}{2i}\right)+\log_2e}$$
    $$\Downarrow$$
    $$m_i>\log_2\left(\frac{2i\ell^{2i-1}}{(2i-1)!}\right)=(2i-1)\log_2\ell+O(1).$$
	So the number of redundant stabilizer measurements required by this construction is at least $$2t_c+\sum\limits_{i=1}^{t_c}(2t_c-2i+1)(2i-1)\log_2\ell.$$
    This sum can actually be evaluated to a closed form; $\sum_{i=1}^{t_c}(2i-1)(2t_c-2i+1)\log_2\ell=\log_2\ell(\frac{2t^3+t}{3})$. So in terms of $t$ and $\ell$, the total number of additional stabilizers constructed is
    $$|S|-(n-k)=2t_c+\frac{\log_2\ell}{3}\left(2t_c^3+t_c\right).$$
    The dominant term of this is $\frac{2t_c^3\log_2\ell}{3}$, and $$\frac{2t_c^3\log_2\ell}{3}\in O(t_c^3\log\ell).$$

 \end{proof}
	
	\section{Syndrome Measurement Codes}
	
	In general, given a stabilizer code, it is nontrivial to choose a set of stabilizer generators such that the resulting QDS code has a good total minimum distance. Instead we make use of \textit{syndrome measurement (SM) codes}. A syndrome measurement code is a $[n_{S},\ell,2t_{S}+1]$ classical block code that defines an overdetermined set of $n_{S}$ stabilizer operators to be measured. This allows for a two-step decoding protocol that is simpler than simultaneously decoding syndrome and data errors, but can perform suboptimally in comparison \cite{Ashikhmin2020}. In the classical decoding step, the measured length-$n_C$ bit string is decoded with a decoder of the SM code. This results in a length-$\ell$ syndrome for the stabilizer code, which is then used to correct quantum errors in the second step. One advantage of using a SM code is that the number of correctable syndrome bit-flip errors is easy to dictate and independent from the minimum distance of the stabilizer code.

    If we have a stabilizer code $\qcode{n,k,d}$ whose syndrome is encoded in a $[n_C,\ell,d_C]$ classical code, then the overall minimum distance of the QDS code is $d'\geq\min(d,d_C)$, and it can correct up to a simultaneous $\floor{\frac{d-1}{2}}$ errors on the data qubits and $\floor{\frac{d_C-1}{2}}$ errors on the syndrome bits. This ability to control the distance of the SM code makes the codes particularly useful for systems with relatively high probability of measurement error. 
	
	\section{Protecting Syndromes with BCH Codes}

	We propose the use of primitive narrow-sense Bose-Chaudhuri-Hocquenghem (BCH) codes as SM codes to encode the syndrome bits. These codes are a class of cyclic binary codes of the form $[2^{m}-1, 2^{m}-R(m,t) - 1, 2t+1]$, where $R(m,t)\leq mt$ for a chosen $m,t\in\mathbb{N}$. The properties of a BCH code are defined by the degree of the least common multiple of certain irreducible polynomials \cite{Bose1960b,Bose1960a,Hocquenghem1959}. Any BCH code can be shortened to create a $[2^m-1-a,2^m-R(m,t)-1-a,2t+1]$ code. The generators of the shortened code are the codewords that are zero in their first $a$ bits. Note that because both the number of logical and data bits both decrease by $a$, this process does not change the difference between the number of logical and data bits. This means that whether we use a regular or shortened BCH code as an SM code, the number of additional stabilizers measured will still be $R(m,t)$.

    In order to encode a syndrome of an $\qcode{n,k,d}$ quantum code (with a $\ell\times 2n$ stabilizer matrix $H$) in a BCH code with distance $d_{S}=2t_{S}+1$, we choose $m$ to be the smallest integer such that $\ell\leq 2^m-mt_{S}-1$. Note that this means $2^{m-2}<\ell<2^m<4\ell$, and so $\log_2\ell$ is $O(m)$ and $m$ is $O(\log\ell)$. 
    
    We use this $m$ and our desired $t_{S}$ to find the corresponding $[2^{m}-1,2^m-R(m,t_S)-1,2t_S+1]$ BCH code, and if $2^{m}-R(m,t_{S})-1>\ell$ we shorten it by an appropriate amount so that it is a $[\ell+R(m,t_{S}),\ell,2t_{S}+1]$ code. This code has a $\ell\times (\ell+R(m,t_{S}))$ generator matrix $G_{B}$. We can use this matrix to generate the $(\ell+R(m,t_{S}))\times 2n$ stabilizer matrix for our $\qcode{n,k,d:R(m,t_{S})}$ QDS code, $H_{Q}:=G_{B}^TH$.

\begin{theorem}
    Using a BCH code or shortened BCH code that can correct up to $t$ errors as a SM code encoding $\ell$ syndrome bits requires $O(t\log\ell)$ additional stabilizer measurements.
\end{theorem}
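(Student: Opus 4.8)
The plan is to reduce the statement to two facts that are already essentially available from the construction: that the number of extra measurements equals the redundancy $R(m,t)$ of the chosen BCH code, and that this redundancy obeys the standard bound $R(m,t)\le mt$. Combining these with the estimate $m\in O(\log\ell)$ coming from the choice of $m$ then yields the result by direct substitution.

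First I would observe that, by the construction, whether we use a full BCH code or a shortened one, the number of additional stabilizers measured is exactly $R(m,t)$. This is because shortening decreases the block length and the dimension by the same amount $a$, so the quantity $n_C-\ell$ is invariant under shortening and equals $R(m,t)$ in both cases. Hence it suffices to bound $R(m,t)$.

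Next I would invoke the standard redundancy bound for primitive narrow-sense BCH codes. The generator polynomial is the least common multiple of the minimal polynomials of $\alpha,\alpha^2,\dots,\alpha^{2t}$ over $\GF(2)$, where $\alpha$ is a primitive element of $\GF(2^m)$; each such minimal polynomial has degree at most $m$, and only the $t$ odd powers contribute distinct factors, so the degree of the generator polynomial---which equals $R(m,t)$---is at most $mt$. This is precisely the inequality $R(m,t)\le mt$ recorded in the excerpt.

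Finally I would combine this with the choice of $m$ as the smallest integer satisfying $\ell\le 2^m-mt-1$. Minimality forces $2^{m-2}<\ell$ (as already noted in the construction), so $m<\log_2\ell+2$, i.e. $m\in O(\log\ell)$; the same inequality certifies that the shortening step is well-defined, since $2^m-R(m,t)-1\ge 2^m-mt-1\ge\ell$. Putting the pieces together gives $R(m,t)\le mt<(\log_2\ell+2)\,t\in O(t\log\ell)$, as claimed. I expect the only points requiring genuine care to be the redundancy bound $R(m,t)\le mt$ and the verification that shortening leaves $R(m,t)$ unchanged; once those are in hand, the asymptotic conclusion is an immediate substitution.
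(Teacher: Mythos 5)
Your proposal is correct and follows essentially the same route as the paper's proof: extra measurements equal $R(m,t)$ (unchanged by shortening), $R(m,t)\le mt$, and $m\in O(\log\ell)$ by the minimal choice of $m$. The only difference is that you supply the generator-polynomial argument for $R(m,t)\le mt$, which the paper simply takes as definitional; this is a welcome but inessential elaboration.
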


\begin{proof}
    When we encode $\ell$ bits in a BCH code with distance $2t+1$, we find the $m$ used to construct the code as the smallest $m$ such that $2^{m}-1-mt>\ell$, and we can see $m$ is in $O(\log\ell)$. The number of additional stabilizer measurements required when using a BCH code as a syndrome measurement code is the difference between the number of encoded and data bits in the BCH code. This is definitionally $R(m,t)\leq mt$, so $R(m,t)\in O(mt)\implies R(m,t)\in O(t\log\ell)$. This $R(m,t)$ is the same for a BCH code and any resulting shortened BCH code, so our methodology requires $O(t\log\ell)$ additional stabilizer measurements.
\end{proof}

The significance of this improvement can be seen by considering the rotated surface code. A $\qcode{d^2,1,d}$ rotated surface code is defined by $\ell=d^2-1$ stabilizers; this means that $\ell\in O(t^2)$ for $t=\floor{\frac{d-1}{2}}$. A standard way to protect against $t$ syndrome errors, proposed by Shor, requires at least $2t$ and up to $t^2$ additional rounds of syndrome extraction \cite{Shor1996}. A minimum-weight perfect-matching (MWPM) space-time decoder is the standard method for fault tolerant syndome extraction for the surface code \cite{DennisJMP2002, FowlerPRL2012}. The method uses $d=2t+1\in O(t)$ rounds of syndrome measurements, yielding $O(\ell t)$ total measurements . In terms of $t$, this means that the most efficient $t$-fault-tolerant syndrome extraction for the surface code requires  $O(t^3)$ additional measurements. Fujiwara's methodology requires $O(t^3\log\ell)$ additional measurements, which is $O(t^3\log t)$ in terms of $t$. In contrast, using our methodology, we require $O(t\log\ell)$ additional measurements; in terms of $t$ this means our method is $O(t\log t)$ for the surface code, significantly better than the alternatives in terms of measurements, but at the cost of a complex syndrome extraction circuit relative to MWPM.
    \subsection{Example: encoding the $\qcode{7,1,3}$ Steane code in a BCH code}
    The Steane code is a CSS code that encodes $X$ and $Z$ errors using the $[7,4,3]$ Hamming code. This Hamming code has parity check matrix:
    $$H_{H}=\mqty[1&0&1&0&1&0&1\\0&1&1&0&0&1&1\\0&0&0&1&1&1&1].$$
    So the  generators of the Steane code's stabilizer group $\mathcal{S}$ and corresponding binary parity-check matrix $H_{S}$ are:
    $$\mathcal{S}=\left<XIXIXIX,IXXIIXX,IIIXXXX,\right.$$$$\left.ZIZIZIZ,IZZIIZZ,IIIZZZZ\right>,$$
    
    \setcounter{MaxMatrixCols}{30}
    $$H_{S}=\mqty[0&H_{H}\\H_{H}&0].$$

    This code gives us a syndrome of length 6. Say we want to protect against $t_{S}=3$ possible syndrome errors. Then we determine our relevant BCH code. The smallest $m$ such that $6<2^{m-1}-3m-1$ is $m=5$, so our BCH code is a $[31,16,7]$ code. We need to shorten it by $a=10$ bits in order for it to encode exactly 6 bits; the resulting code is a $[21,6,7]$ code with generator matrix $G_{B}$:
    
    $$G_{B}=
    \smqty[
1 & 0 & 0 & 0 & 0 & 0 & 0 & 0 & 0 & 1 & 0 & 1 & 0 & 1 & 1 & 0 & 1 & 0 & 0 & 1 & 0 \\
0 & 1 & 0 & 0 & 0 & 0 & 0 & 0 & 0 & 0 & 1 & 0 & 1 & 0 & 1 & 1 & 0 & 1 & 0 & 0 & 1 \\
0 & 0 & 1 & 0 & 0 & 0 & 1 & 1 & 1 & 1 & 0 & 0 & 0 & 0 & 1 & 0 & 0 & 1 & 1 & 0 & 0 \\
0 & 0 & 0 & 1 & 0 & 0 & 0 & 1 & 1 & 1 & 1 & 0 & 0 & 0 & 0 & 1 & 0 & 0 & 1 & 1 & 0 \\
0 & 0 & 0 & 0 & 1 & 0 & 0 & 0 & 1 & 1 & 1 & 1 & 0 & 0 & 0 & 0 & 1 & 0 & 0 & 1 & 1 \\
0 & 0 & 0 & 0 & 0 & 1 & 1 & 1 & 1 & 0 & 1 & 0 & 1 & 1 & 1 & 1 & 1 & 0 & 0 & 0 & 1].$$

    Multiplying $G_{B}^T$ by our matrix $H_{S}$ gives us a $21\times 14$ matrix that defines a set of stabilizers whose measurement returns a length-21 syndrome vector. This can be decoded by appending $a=10$ zeroes to the beginning of the vector and passing it through the decoder for the original $[31,16,7]$ code (for more on the decoding of cyclic codes see \cite{Bose1960a}). The resulting length-16 vector will begin with 10 zeros that can be removed, and we are left with a 6-bit syndrome---which can be used to identify and correct errors on our 7 data qubits in the usual way.

    \subsection{Time overhead}

    A QDS code that uses a BCH code as an SM code will need only $O(mt_{S})$ additional measurements to be performed. Note that $m\in O(\log\ell)$, so the number of additional measurements is $O(t_{S}\log\ell )$. Compared to Fujiwara's construction, which is $O(t_c^3\log\ell)$, we can see that ours gives a significant improvement. Note that the $t_c$ in Fujiwara's construction and the $t_S$ in ours are not necessarily the same. Fujiwara's construction restricts $t_c$ to be at most the $t$ of the quantum code the QDS code is based on. If we desire to protect against a number of syndrome errors fewer than those corrected by the base code, our construction outperforms. If we instead protect against more syndrome errors than $t$, our construction allows for this while Fujiwara's does not. Specifically, if we have a quantum code and want to protect against fewer than $t$ syndrome errors, Fujiwara's construction grows with the number of errors cubed. For the same code and desired correctable syndrome errors, our BCH construction requires a number of additional stabilizer measurements that is linear in $t$. For a comparison of the additional stabilizer measurements needed to obtain the same error-correcting properties between our and Fujiwara's constructions see Fig. \ref{fig:comparison2}. 

    \begin{figure}
        \centering
        \includegraphics[width=1\linewidth]{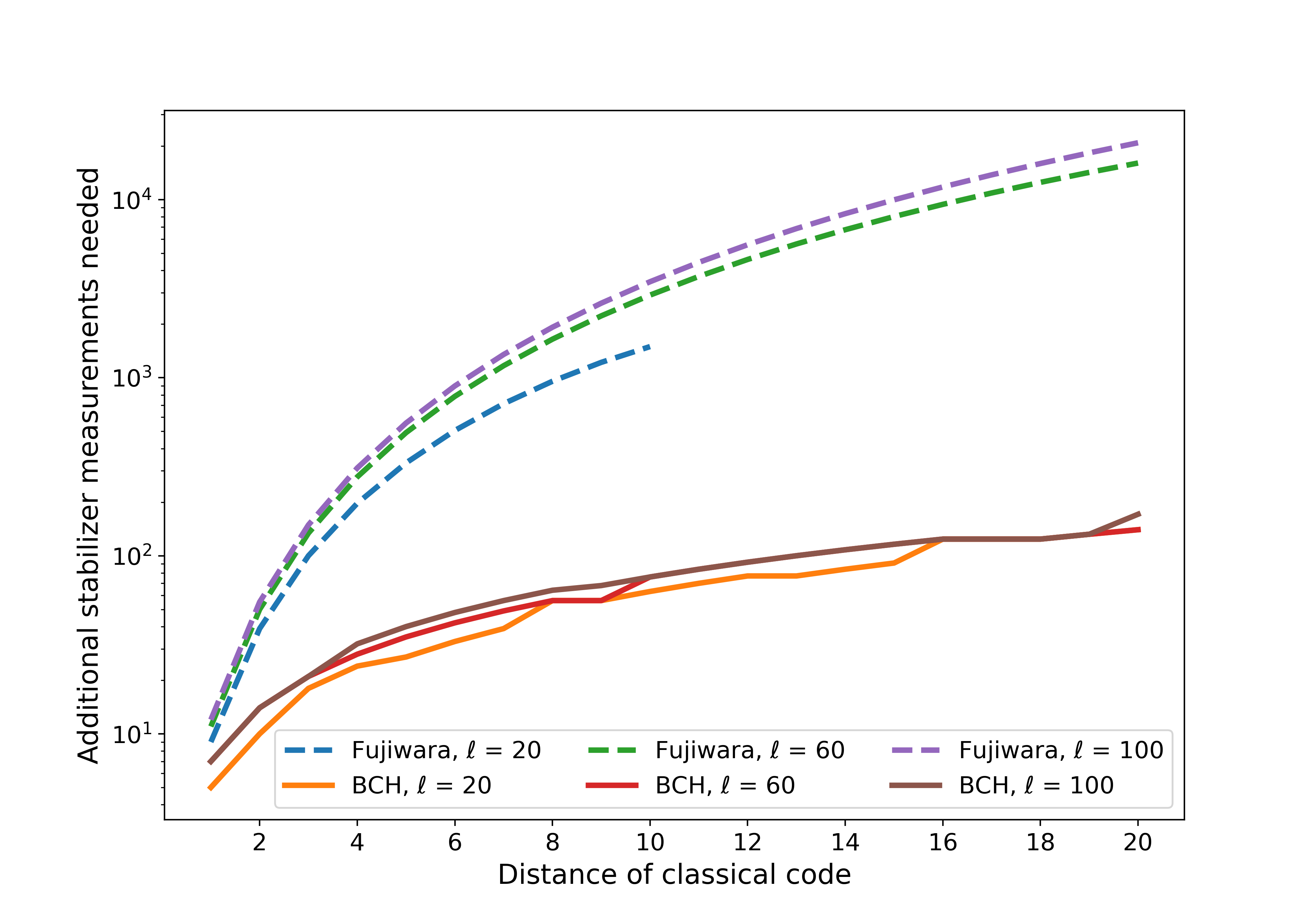}
        \caption{Comparison between the number of additional measurements required by our BCH encoding vs. the additional measurements required by Fujiwara's construction for several values of $\ell=n-k$. Note that the latter's construction is also limited by the number of stabilizer generators in the quantum code.}
        \label{fig:comparison2}
    \end{figure}

    This means that protecting against a specific number of syndrome errors can be done with significantly less time overhead using our encoding procedure than using Fujiwara's construction. Perhaps more meaningfully, it also means that if a circuit has a limited amount of time to perform stabilizer measurements, our construction can correct significantly more errors on the syndrome bits than Fujiwara's construction.

    As an example of this, consider a code with $\ell=10$. If we want to protect against up to 3 syndrome errors, Fujiwara's construction requires up to 76 additional stabilizers. The highest-distance BCH code encoding $\ell=10$ bits such that $R(m,t)\leq 76$ is the $[80,10,23]$ shortened BCH code (shortened from a $[127,57,23]$ code) that only needs 70 additional measurements while protecting against up to 12 syndrome errors. Our methodology allows for significantly more well-protected syndrome measurement in the same amount of time.

    \section{Comparisons}
    To illustrate this, we perform Monte Carlo simulation over many error weights. To obtain accurate results despite the need for high-weight errors with low probability, we determine the probabilities of both logical and decoded syndrome errors for each pair of syndrome error weight $w_s$ and qubit error weight $w_q$, a methodology outlined in \cite{Gutierrez2019}. For a system with probability of syndrome error $p_s$ on a number of syndrome bits $\ell$, and probability of qubit error $p_q$ on number of qubits $q$, we can calculate the probability of a logical error:
    $$p_{err}(p_q,p_s)=\sum_{w_s=1}^{\ell}\sum_{w_q=1}^qA_{w_q,w_s}(p_q,p_s)p_L(w_q,w_s),$$
    where $A_w(p,n)=\binom{n}{w}(p)^{w}(1-p)^{q-n}$ is the probability that $w$ errors will occur on $n$ possible locations when an error occurs with probability $p$, and $A_{w_q,w_s}(p_q,p_s)=A_{w_q}(p_q,q)*A_{w_s}(p_s,\ell)$ is the probablity that exactly $w_q$ qubit errors and $w_s$ syndrome errors will occur. Then $p_L(w_q,w_s)$ is the Monte Carlo-determined probability of an error resulting from the decoding process when $w_q$ qubit errors and $w_s$ syndrome errors occur. Many of the smallest terms of this can be truncated, and the terms where both weights are below the capabilities of the code are guaranteed to be zero, so it is computationally less intense than traditional simulation.

    If we take our encoding and Fujiwara's encoding, in similar numbers of additional stabilizer measurements, our construction performs significantly better under a wide range of possible measurement error probabilities. As an example, consider this encoding of an $\ell=10$ code (Fig. \ref{fig:comparison1}). The probability of a set of bit flips resulting in a logical error is significantly lower with the BCH SM code at all bit-flip error probabilities, with a very pronounced effect at lower probabilities due to the significantly higher distance in the BCH construction.

    \begin{figure}
        \centering
        \includegraphics[width=\linewidth]{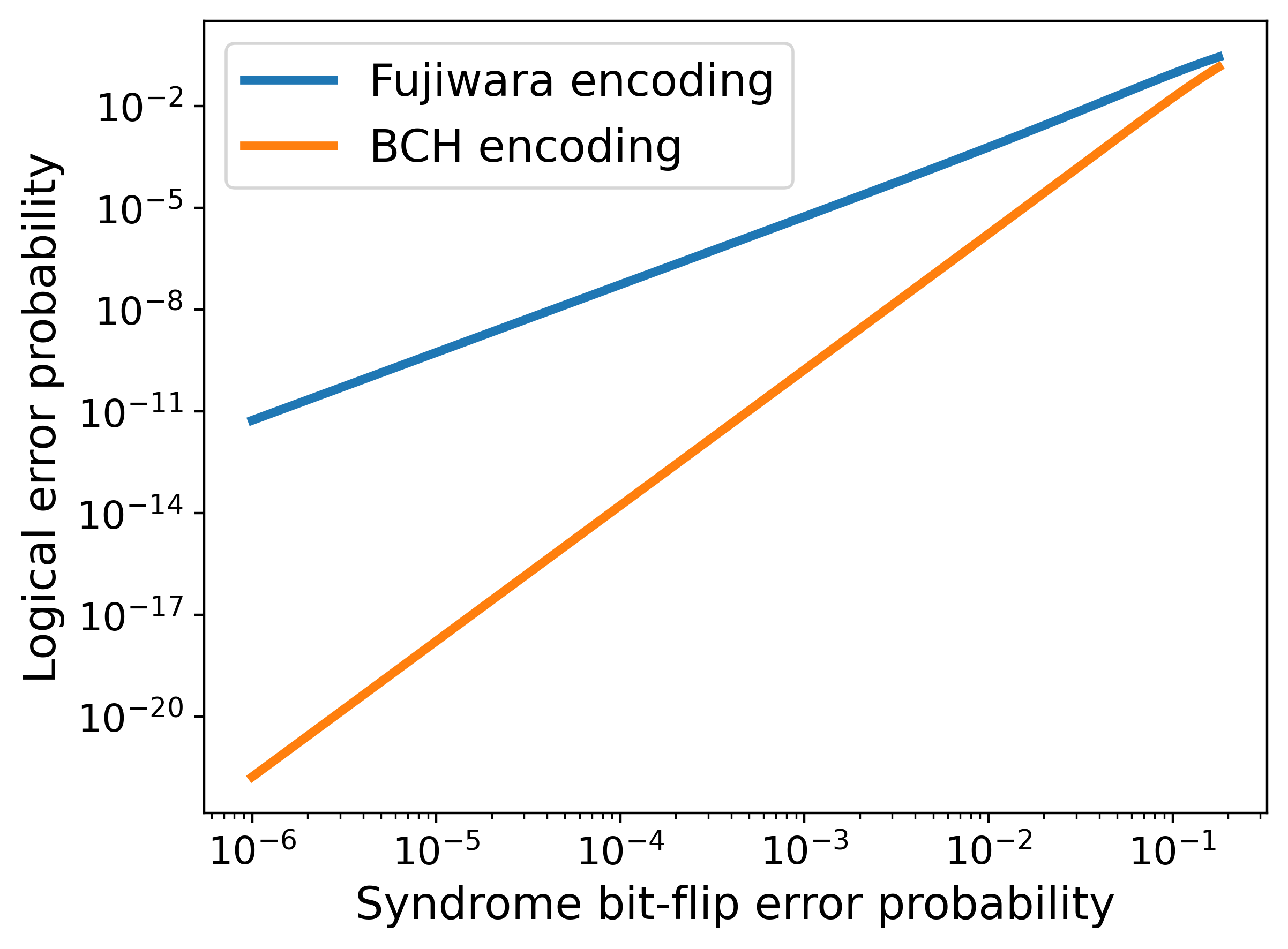}
        \caption{Behavior of BCH and Fujiwara encoding on a code with $\ell=10$, $t_{c} = 3$, $t_{BCH}=12$, and the same number of additional stabilizer measurements. This shows the probability of a logical error arising from incorrect stabilizer decoding. Errors in the results due to sample error from the Monte Carlo simulations are negligible.}
        \label{fig:comparison1}
    \end{figure}

    \begin{figure}
        \centering
        \includegraphics[width=\linewidth]{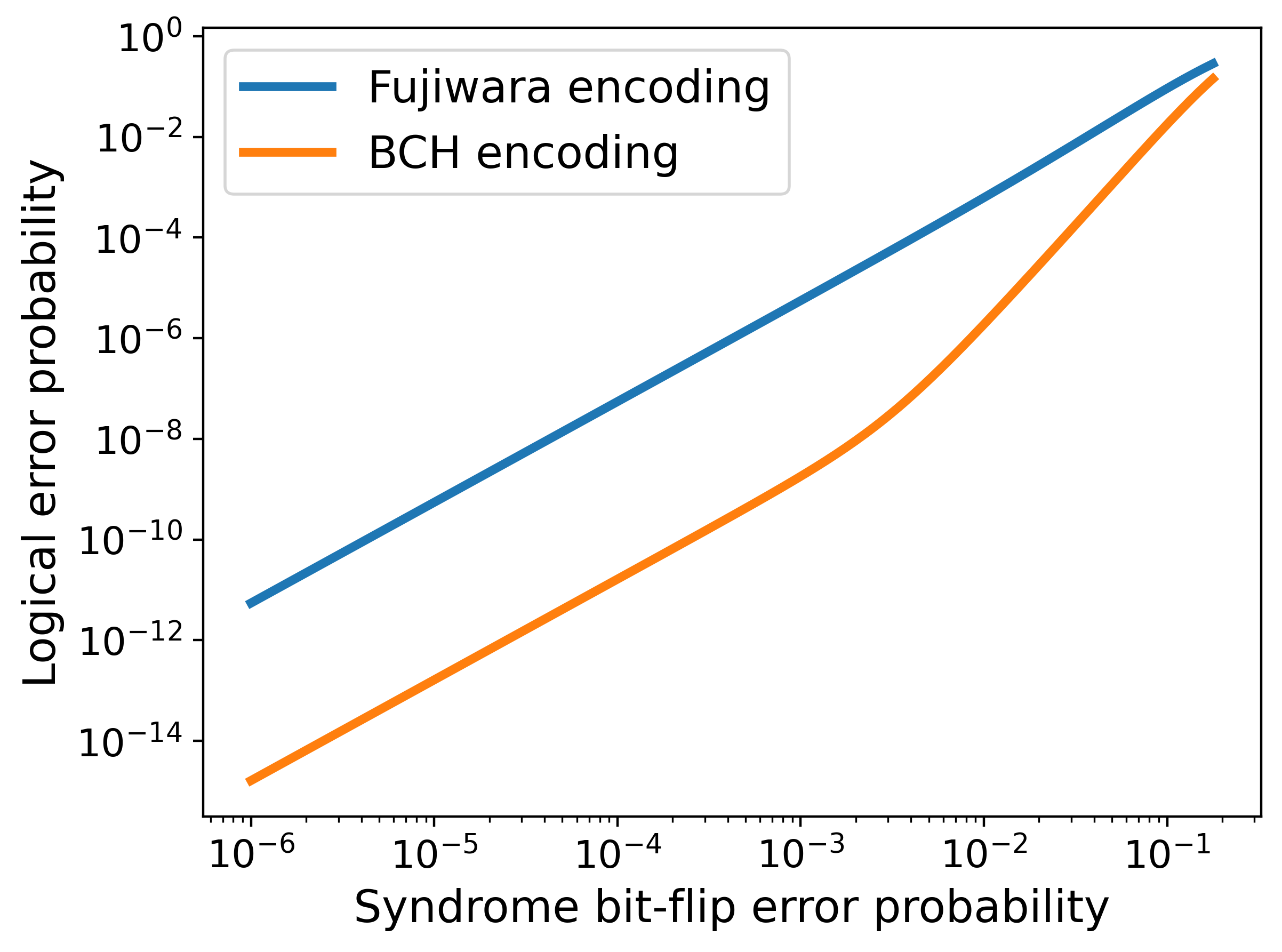}
        \caption{Behavior of BCH and Fujiwara encoding on the Steane code when qubit error probability is 100x less that syndrome bit-flip error probability. This shows that at very high error probabilities, both codes perform relatively similarly; however the BCH code performs significantly better at lower error probabilities. Errors in the results due to sample error from the Monte Carlo simulations are negligible.}
        \label{fig:sameprob}
    \end{figure}

    In a realistic model, it is likely that qubit errors and syndrome bit-flip error probabilities are similar to each other. It can be helpful then to look at the behavior of these codes while varying both errors. In Fig. \ref{fig:sameprob} we look at an encoding of the Steane code in a BCH code and in Fujiwara's construction for syndrome bit-flip probability 100x greater than the Pauli error probability. We can see that the codes both reach the same slope, because the probability of qubit error becomes more significant than the probability of measurement error. However, because the BCH code has a higher distance, it behaves significantly better.
 
	\section{Conclusion}
    We have shown that it is possible to create quantum error-correcting codes that are robust against syndrome measurement errors. We show that using the classical BCH family of cyclic binary codes, we can protect against any number of syndrome errors $t$ we choose, while only requiring a number of additional stabilizer measurements linear in $t$. This is a significant improvement on the previous construction given in \cite{Fujiwara2015}, which requires a number of additional measurements cubic in $t$. Our use of BCH codes confers two advantages. First, for a specific desired number of syndrome errors to correct, our construction requires significantly fewer syndrome measurements and therefore takes significantly less time than that in \cite{Fujiwara2015}. Second, if we have a certain amount of time allocated for syndrome measurement, our construction allows for significantly more syndrome errors to be corrected with the same time overhead as Fujiwara's method.

    One direction of future research is investigating how the use of BCH codes as syndrome measurement codes can be combined with other methods of fault-tolerant techniques such as flag error correction \cite{Chao2020,Anker2022,Tansuwannont2023}, adaptive syndrome measurement \cite{Tansuwannont2023,Pato2023}, and single-shot error correction \cite{Delfosse2022,Bombin2015,Campbell2019,Lin2023}, and whether this improves their error-correcting properties. Another direction for future study is to go beyond the phenomenological error model and investigate the usage of SM codes under full circuit noise. 

    Although BCH codes are generally good codes, this methodology may not the optimal choice for specific quantum codes. For highly structured families of quantum codes, using BCH codes as SM codes may not preserve the structure of the underlying quantum code. For example, a good classical code may not be a good SM code for a QLDPC code if it measures high-weight stabilizer elements. For surface codes, using a BCH code as a SM code likely will not preserve the locality of stabilizer elements. For a CSS code it may be desirable to avoid having to measure the product of both $X$- and $Z$-type stabilizers, and therefore to encode the $X$- and $Z$-type subgroups separately with SM codes. Therefore, there is significant work that is needed for determining the best SM codes for these families. Future work may therefore include encoding subgroups of the stabilizer generators in separate syndrome measurement codes to preserve certain properties of the base quantum code.
\section*{Acknowledgements}
This work was supported by the ARO/LPS QCISS program (W911NF-21-1-0005) and the NSF QLCI for Robust Quantum Simulation (OMA-2120757). Support is also acknowledged from the U.S. Department of Energy, Office of Science, National Quantum Information Science Research Centers, Quantum Systems Accelerator.
    
	\nopagebreak
	
	\thispagestyle{empty}
	
\printbibliography
	
\end{document}